\newcommand{\ee}{\mathrm{e}}
\newcommand{\ii}{\mathrm{i}}
\newcommand{\rr}{\mathbb{R}}
\newcommand{\gpsr}{\mathrm{gPSR}}
\newcommand{\cpsr}{\mathrm{cPSR}}
\newcommand{\cfd}{\mathrm{cFD}}
\newcommand{\ffd}{\mathrm{fFD}}
\newcommand{\bfd}{\mathrm{bFD}}
\newcommand{\thetab}{{\boldsymbol\theta}}
\newtheorem{theorem}{Theorem}
\newtheorem{corollary}{Corollary}[theorem]
\theoremstyle{definition}
\newtheorem{definition}{Definition}
\newtheorem{remark}{Remark}[definition]
\title{Single-component gradient rules \\for variational quantum algorithms}
\author{\small Thomas~Hubregtsen$^1$\thanks{These authors contributed equally to this work.} , 
Frederik~Wilde$^{1 *}$, 
Shozab~Qasim$^1$, 
Jens~Eisert$^{1,2}$}
\affil{\footnotesize ${}^1$ Dahlem Center for Complex Quantum Systems, Freie Universit{\"a}t Berlin, 14195 Berlin, Germany\\
${}^2$ Helmholtz-Zentrum Berlin f{\"u}r Materialien und Energie, 14109 Berlin, Germany}
\date{}
\begin{document}

\maketitle
\vspace*{-1cm}
\begin{abstract}
Many near-term quantum computing algorithms are conceived as variational quantum algorithms, in which parameterized quantum circuits are optimized in a hybrid quantum-classical setup. 
Examples are variational quantum eigensolvers, quantum approximate optimization algorithms as well as various algorithms in the context of quantum-assisted machine learning. 
A common bottleneck of any such algorithm is constituted by the optimization of the variational parameters. 
A popular set of optimization methods work on the estimate of the gradient, obtained by means of circuit evaluations.
We will refer to the way in which one can combine these circuit evaluations as gradient rules.
This work provides a comprehensive picture of the family of gradient rules that vary parameters of quantum gates individually. 
The most prominent known members of this family are the parameter shift rule and the finite differences method. 
To unite this family, we propose a generalized parameter shift rule that expresses all members of the aforementioned family as special cases, and discuss how all of these can be seen as providing access to a linear combination of exact first- and second-order derivatives.
We further prove that a parameter shift rule with one non-shifted evaluation and only one shifted circuit evaluation does not exist, and introduce a novel perspective for approaching new gradient rules.
\end{abstract}

\section{Introduction}
%
% broader scope
Quantum computing has enjoyed a continuous growth in attention, increasingly so since quantum computers are now able to perform tasks promised to be intractable for classical devices~\cite{arute2019quantum, zhong_2020_quantum,Roadmap}.
While these recent experiments impressively showcase the remarkable progress in hardware design and manufacturing, it is unclear on which type of problems near-term quantum computers, without the ability to perform full quantum error correction, can possibly achieve a practical advantage~\cite{preskill2018quantum}.
Still, as more and more small-scale quantum computers become available, the search for a task for which one can reasonably expect a practical quantum advantage is gaining traction. 

% Algorithms and real-world use
Various quantum algorithms are being explored as a potential mechanism for achieving a real-world advantage.
The automotive industry attempts to leverage the quantum approximate optimization algorithm~\cite{farhi_2014_qaoa} for finance and logistics~\cite{qaoa_2021}. 
The aerospace industry investigates quantum machine learning techniques for the simulation of fluid dynamics~\cite{Kyriienko_2020_nde}.
In the field of quantum chemistry, various readings of the variational quantum eigensolver~\cite{peruzzo_2014_variational} are being examined to approximate molecular ground-state energies of molecules~\cite{IBMQuantumChemistry,abbas_2020_power,hempel_2018_chemistry,GogolinVQE,GoogleQuantumChemistry}.
Variational quantum-assisted machine learning algorithms~\cite{bharti_2021_noisy}, or instances of quantum neural networks, are also examined for practical advantage over classical neural networks~\cite{abbas_2020_power}. 
Closely linked are kernel methods~\cite{schuld_2021_quantum, havlivcek2019supervised} and trainable kernel methods~\cite{hubregtsen_2021_trainable_kernels}. 
One can also go a step further and use variational quantum circuits to design other applications of quantum technologies, such as in quantum metrology ~\cite{QCAD,Borregaard}, so that one quantum device is delivering a blueprint for another.

% bringing it to gradients
All these algorithms are commonly conceived as
hybrid quantum-classical algorithms~\cite{McClean_2016} in which a quantum circuit depends on a set of free parameters which are typically optimized in an iterative loop controlled by a classical optimizer.
This for good reason, as any reasonable quantum architecture in the foreseeable future has limited depth, with the gates fully dedicated to the task at hand. 
The optimization is done by defining an appropriate cost function as the expectation value of some observable with respect to the output state of the parameterized circuit. 
While this idea appears straightforward, it turned out that achieving suitably efficient and effective classical control of hybrid algorithms is a challenging task, often giving rise a bottleneck for the application of such algorithms. 
This applies in particular to settings in which the evaluation of expectation values is costly, such as in cold atomic architectures, where a single measurement can take some tens of seconds, so that meaningful variational algorithms can only be implemented using the most sophisticated tools of classical control.

% Optimization and gradients
While several optimization methods for cost functions of this kind have been studied, it remains an open task to find the optimal method given a specific problem and algorithm~\cite{Benedetti_2019}. In recent years, first-order optimization methods, utilizing the gradient of the cost function, have increasingly gained traction in the field of variational quantum algorithms.
Most notable methods for computing gradient components are finite difference methods for approximation and the parameter shift rule~\cite{li_2017_psr, mitarai_2018_psr, schuld_2018_psr, liu_2018_psr, schuld_2019_psr, crooks_2019_psr, banchi2021measuring, mari_2020_hod} to calculate the components exactly.
Alternative methods entail the quantum natural gradient~\cite{stokes2020quantum}, which calculates the gradient in the distribution space as opposed to the parameter space, and simultaneous perturbation stochastic approximation~\cite{spall1998overview}, which estimates multiple parameters at once.
These approaches can be combined with approaches aimed at
maximizing the information gain from the coherence available to speed up estimation of the expectation values \cite{MinimizingRuntime}.

% Parameter shift rule. 
A first version of the parameter shift rule was first used in the area of control-pulse theory~\cite{li_2017_psr}.
It entailed two circuit evaluations with a shift of $\pi/2$ for the parameter under evaluation. 
Similar approaches have been employed in subsequent improvements~\cite{mitarai_2018_psr, schuld_2018_psr, liu_2018_psr}.
This work was expanded to hold for any unitary gate $U_G=e^{i \theta G}$ whose Hermitian generator $G$ has two distinct eigenvalues in a circuit  $f(\theta) = \bra{\psi}U_G^\dagger(\theta)AU_G(\theta)\ket{\psi}$ with observable $A$~\cite{schuld_2019_psr}.
Gates that have generators with more than two unique eigenvalues can be handled by decomposing the gates~\cite{crooks_2019_psr}. 
A stochastic algorithm for estimating the gradient of any multi-qubit parametric quantum evolution through a mathematically exact formula, without the need for ancillary qubits or the use of Hamiltonian simulation techniques followed~\cite{banchi2021measuring}.
Subsequently, the estimation of higher-order derivatives was proposed, together with a derivation of a formula that allows for gate-independent shifts of opposing angle~\cite{mari_2020_hod}. 
We will use the variable $\gamma$, to indicate the value of a shift that can be freely chosen, e.g., independent from the characteristics of the gate. 
This leads to the current centered parameter shift rule (cPSR) to calculate the analytic gradient $f'(\theta)$ using circuit evaluations as
\begin{align}
    \label{eq:1st_order}
    g_{\cpsr}(\theta, \gamma) :=
    \frac{r[f(\theta + \gamma) - f(\theta - \gamma)]}{\sin(2r\gamma)},
\end{align}
for any $\gamma$ that is not an integer multiple of $\pi/(2r)$, where $2r$ is the difference of the eigenvalues of $G$. 

% Our work
In this work, we unite all known members of the family of methods that use circuit evaluations to calculate each component of the gradient vector individually, by introducing a generalized parameter shift rule. 
This also yields the variance and bias of all previous methods.  
Furthermore, we close the discussion on the potential existence of a forward or backward parameter shift rule by proving its non-existence. 

% Next sections
We start our work by describing the underlying setting in Section \ref{sec:setting}. 
The generalized parameter shift rule is presented in Section \ref{sec:gpsr}, and show its relation to the finite differences and parameter shift rule. 
Section \ref{sec:nogo} discusses the no-go results for the forward and backward parameter shift rule. We end this work with a discussion of the type of data any of these methods delivers and outlook on future research in Section \ref{sec:discussion}, on how a general picture of gradient rules should be conceived. 
\section{Setting}
\label{sec:setting}

%% Circuits exist
Variational algorithms as considered in this work consist of \emph{parameterized quantum circuits}, mapping $\thetab\mapsto U(\thetab)$, that are defined by the quantum circuit topology and variational parameters $\thetab = (\theta_1,\dots, \theta_n)$.
Note that in the case of embedding data $\boldsymbol{x}$, the circuit can be imagined to have an implicit dependence of the data $(\thetab, \boldsymbol{x})\mapsto U(\thetab, \boldsymbol{x})$, where $\boldsymbol{x}$ can be assumed to be constant for the purpose of this work.
%
%% cost function
In order to optimize the circuit we first need to specify a cost function.
Cost functions are in general not merely expectation values but functions thereof, but for the purposes of the present work it is sufficient to investigate expectation values as we can use the chain rule.
\begin{definition}[Cost function]
    For variational parameters $\thetab= (\theta_1,\dots, \theta_n)$, a parameterized quantum circuit $\thetab\mapsto U(\thetab)$, an observable $A$, and a quantum state vector $\ket{\psi}$ reflecting a pure initial state, a cost function $F:\rr^n\rightarrow \rr$ is defined as
    \begin{equation}
        F(\thetab) := \bra{\psi} U(\thetab) A U^\dagger(\thetab)
        \ket{\psi}.
    \end{equation}
    A single-shot estimator $\hat{F}(\thetab)$ for the cost function $F(\thetab)$ corresponds to 
    %the random variable defined by 
    a single measurement of the Hermitian observable $A$ when the system
    is described by the state vector $U(\thetab)\ket{\psi}$.
    Following these lines, the $n$-shot estimator $\hat{F}^{(n)}(\thetab)$ is the sample mean of $n$ i.i.d. samples of the single-shot estimator.
\end{definition} 
%% Minimize the cost function
We then complete the variational algorithm by searching for updates to the parameters $\thetab$ such that the cost function is minimized.
%% Using gradients
In the entirety of this work, we will focus on optimizing the cost function $F$ by estimating its gradient $\nabla_\thetab F$.
%% What are gradient rules
Gradient rules are strategies that help estimate the gradient, leveraging carefully chosen circuit evaluations exploiting symmetries in the problem.
These methods dictate shifts for the parameters of the circuit, and how to combine the resulting cost function measurements to form the gradient. 

\begin{definition}[Gradient rule]
    To estimate the $i$-th component of the gradient at the parameter point $\thetab \in \rr^n$, a gradient rule $g$ requires the evaluation of the cost function $F$ at shifted parameter points $\thetab + \boldsymbol{s}_{1}, \ldots, \thetab + \boldsymbol{s}_{k}$ (depending on $i$) 
    and describes how the resulting measurements form this gradient component, i.e.
    \begin{equation}
        g(\thetab, \boldsymbol{s}_1, \ldots, \boldsymbol{s}_k)
        = \frac{\partial F}{\partial\theta_i}.
    \end{equation} 
    Corresponding to the gradient rule $g$, we define the estimator $\hat{g}$ where all cost function evaluations $F(\thetab + \boldsymbol{s})$ are replaced by single-shot measurement estimates $\hat{F}(\thetab + \boldsymbol{s})$.
\end{definition}
By applying gradient rules multiple times, one can recover higher order derivatives such as the Hessian.
In the next section, we will exclusively focus on the family of gradient rules that alter each parameter individually. 
We will refer to these as single-component gradient rules.
As an example, in the case of the original parameter shift rule~\cite{schuld_2018_psr} for the $i$-th component, we have $\boldsymbol{s}_{1} = \pi/(4r)\boldsymbol{e}_i$ and $\boldsymbol{s}_{2} = -\pi/(4r)\boldsymbol{e}_i$, where $\boldsymbol{e}_i$ is the $i$-th canonical basis vector.
\section{Single-component gradient rules}
\label{sec:gpsr}

For the scenario of gradient rules that alter the parameters individually, we can now define a single-component cost function for the $i$-th component of the parameter vector, defined as
\begin{align}
    f(\theta) :=  F(\theta_1,\dots,\theta_{i-1}, \theta, \theta_{i+1},\dots, \theta_n) = \bra{\psi} U(\thetab) A U(\thetab)^\dagger
        \ket{\psi},
\end{align}
where all parameters, except $\theta_i$ are fixed.
This gives rise to a family of single-component gradient rules giving access to $f' = \partial f / \partial\theta$.
To rigorously state the (generalized) parameter shift rule, we define $r$-gates and the set of circuits parameterized by them.

\begin{definition}[$r$-gate]
    A parameterized gate $\ee^{-\ii \theta G}$ is called an $r$-gate if its Hermitian generator $G$ has exactly two eigenvalues $e_0$ and $e_1$, such that $e_0 - e_1 = 2r$.
\end{definition}

\begin{definition}[Set of $r$-gate parameterized expectation values]
    \label{def:set-of-expec-vals}
    Let $q \geq 1$ be the number of qubits in the system at hand.
    Then, the set $\mathcal{F}_r$ of all $r$-gate parameterized expectation values on this system is defined as follows.
    A single-parameter family of expectation values $f:[0, 2\pi/r) \rightarrow \mathbb{R}$ is contained in $\mathcal{F}_r$ if and only if there exists a 
    state vector $\ket{\psi} \in \mathcal{H} := \mathbb{C}^{\otimes q}$, an $r$-gate $\ee^{-\ii\theta G}$, and a Hermitian operator $A$ acting on $\mathcal{H}$, such that for all $\theta\in 
    [0,2\pi/r)$
    \begin{equation}
        f(\theta) = \bra{\psi} \ee^{\ii \theta G} A \ee^{-\ii \theta G}\ket{\psi}.
    \end{equation}
\end{definition}
\begin{remark}[Convention for the spectrum of generators]
    We assume $e_0 = -e_1 = r$ for the purposes of this work~\cite{schuld_2019_psr}.
    If this was not the case, we could simply add a phase shift $\ee^{\ii\theta (e_0 - e_1)/2}$ to the gate $\ee^{-\ii\theta G}$, which would have no effect on the expectation value.
    With this assumption, we can express the gate as
    \begin{equation} \label{eq:euler}
        \ee^{-\ii \theta G} = \cos(r\theta) - \ii \frac{G}{r} \sin(r\theta),
    \end{equation}
    where we have exploited the fact that $G^2 = r^2\mathbb{1}$.
\end{remark}

% NOTE: make sure this sentence is sufficiently long so that the reference is not broken by a newline
In particular, we will be now be taking a closer look at and generalizing the \emph{centered, backward} and \emph{forward finite difference rules} (cFD, bFD and fFD respectively), as well as the \emph{centered parameter shift rule}~(cPSR)~\cite{li_2017_psr, mitarai_2018_psr, schuld_2018_psr, liu_2018_psr, schuld_2019_psr, crooks_2019_psr, banchi2021measuring, mari_2020_hod}. 
In the above language, they can be stated as
\begin{align}
    g_{\cfd}(\theta, h) &:= \frac{f(\theta + h) - f(\theta - h)}{2h} \simeq f'(\theta), \label{eq:cfd} \\
    g_{\bfd}(\theta, h) &:= \frac{f(\theta) - f(\theta - h)}{h} \simeq f'(\theta), \label{eq:bfd} \\
    g_{\ffd}(\theta, h) &:= \frac{f(\theta + h) - f(\theta)}{h} \simeq f'(\theta) ,\label{eq:ffd} \\
    g_{\cpsr}(\theta, \gamma) &:= 
    \frac{r[f(\theta + \gamma) - f(\theta - \gamma)]}{\sin(2r\gamma)} = f'(\theta). 
\end{align}
Here, $h$ and $\gamma$ represent the shift of the parameter, where $h$ is small compared to the period of $f$ while $\gamma$ can take any value except $k\pi/(2r)$ for any $k\in\mathbb{Z}$. 
It is important to note that while the first three rules aim at approximating the first derivative, the last rule provides an exact expression on the level of expectation values
for all $\gamma\in \mathbb{R}$.
In the following definition, we will present the generalized parameter shift rule, which we show in the subsequent proof to express the first- and second-order derivatives.
In the subsequent corollaries, we show how the generalized parameter shift rule relates to the gradient rules in this family, as well as their relation to the (higher-order) derivatives.  
This will also include exact expressions for the bias, whereas previously the asymptotic relation in equations (\ref{eq:cfd}-\ref{eq:ffd}) could only be approximated.

\begin{definition}[Generalized parameter shift rule]
    The generalized parameter shift rule (gPSR) is defined as 
    \begin{align}
        g_\gpsr(\theta, \gamma_1, \gamma_2) := r \left[ f(\theta + \gamma_1) - f(\theta - \gamma_2) \right],
    \end{align}
    for $\gamma_1,\gamma_2\in \mathbb{R}$, where $f \in \mathcal{F}_r$ is an $r$-gate parameterized expectation value (see Definition \ref{def:set-of-expec-vals}).
\end{definition}

\begin{theorem}[Deriving rules from the generalized parameter shift rule]
    Given two shifts $\gamma_1, \gamma_2 \in \rr$, the gPSR as defined above relates to the first and second derivative of $f$ as 
    \begin{eqnarray} 
        \label{eq:gpsr_equality_1}
        g_\gpsr(\theta, \gamma_1, \gamma_2)
        = \frac{ \sin(2r \gamma_1) + \sin(2r \gamma_2)}{2} f'(\theta) - \frac{\cos(2r \gamma_1) - \cos(2r \gamma_2)}{4r} f''(\theta). 
    \end{eqnarray}
\end{theorem}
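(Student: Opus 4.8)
The plan is to reduce the statement to elementary trigonometry by first pinning down the exact functional form of $f$. The structural heart of the argument is the observation that, because the generator $G$ has only two eigenvalues $\pm r$, the single-component expectation value $f$ is a first-order trigonometric polynomial in the variable $2r\theta$. Concretely, I would insert the Euler form $\ee^{-\ii\theta G} = \cos(r\theta)\mathbb{1} - \ii (G/r)\sin(r\theta)$ from \eqref{eq:euler} into $f(\theta) = \bra{\psi}\ee^{\ii\theta G} A \ee^{-\ii\theta G}\ket{\psi}$ and expand, using $G^2 = r^2\mathbb{1}$ to keep the conjugated observable linear in $G/r$. Collecting the resulting $\cos^2(r\theta)$, $\sin^2(r\theta)$, and $\cos(r\theta)\sin(r\theta)$ contributions and applying the double-angle identities yields
\begin{equation}
    f(\theta) = A_0 + A_1\cos(2r\theta) + B_1\sin(2r\theta),
\end{equation}
for real constants $A_0, A_1, B_1$ determined by $\ket{\psi}$, $A$, and $G$; the reality of the coefficient multiplying $\cos(r\theta)\sin(r\theta)$ follows from $\ii[G,A]$ being Hermitian.

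From here everything is mechanical. Differentiating gives $f'(\theta) = 2r[-A_1\sin(2r\theta) + B_1\cos(2r\theta)]$ and $f''(\theta) = -4r^2[A_1\cos(2r\theta) + B_1\sin(2r\theta)]$, so the oscillatory part $g := f - A_0$ obeys the harmonic relation $g'' = -4r^2 g$, equivalently $g = -f''/(4r^2)$, while $g' = f'$. The key consequence I would exploit is the shift identity for such frequency-$2r$ functions,
\begin{equation}
    g(\theta \pm \gamma) = \cos(2r\gamma)\, g(\theta) \pm \frac{\sin(2r\gamma)}{2r}\, g'(\theta),
\end{equation}
which is immediate from the angle-addition formulas applied to $A_1\cos(2r(\theta\pm\gamma)) + B_1\sin(2r(\theta\pm\gamma))$.

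To finish, I would apply this identity once with shift $\gamma_1$ and once with shift $\gamma_2$ in the minus branch, subtract, and note that the constant $A_0$ cancels. This produces
\begin{equation}
    f(\theta+\gamma_1) - f(\theta-\gamma_2) = \frac{\sin(2r\gamma_1)+\sin(2r\gamma_2)}{2r}\, f'(\theta) - \frac{\cos(2r\gamma_1)-\cos(2r\gamma_2)}{4r^2}\, f''(\theta),
\end{equation}
after substituting $g = -f''/(4r^2)$ and $g' = f'$. Multiplying through by $r$ reproduces \eqref{eq:gpsr_equality_1} exactly.

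The only genuinely substantive step is the first one: establishing that the two-eigenvalue structure forces $f$ into the single-frequency form, so that $f$, $f'$, $f''$ all live in the span of $\{1,\cos(2r\theta),\sin(2r\theta)\}$. Everything afterward is bookkeeping with trigonometric identities, and in particular no approximation or smallness assumption on $\gamma_1,\gamma_2$ enters, so the resulting identity is exact for all real shifts. I would expect the main care to be needed in verifying the reality of the coefficients and in keeping the signs correct when specializing the shift identity to the $-\gamma_2$ branch.
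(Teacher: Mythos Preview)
Your argument is correct and complete; the signs and the factor of $r$ all check out. The route differs from the paper's in where the Euler expansion is applied. The paper factors $\ee^{-\ii(\theta+\gamma)G}=\ee^{-\ii\theta G}\ee^{-\ii\gamma G}$, absorbs the $\theta$-part into a state $\ket{\phi(\theta)}$, expands only the $\gamma_1,\gamma_2$ shift gates via \eqref{eq:euler}, and then recognizes the six resulting bra--ket expressions $\bra{\phi}A\ket{\phi}$, $\bra{\phi}GAG\ket{\phi}$, $\bra{\phi}[\ii G,A]\ket{\phi}$ as (combinations of) $f$, $f(\theta+\pi/(2r))$, and $f'$, assembling $f''$ from the first two via the double-angle identity. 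You instead apply \eqref{eq:euler} directly at the level of $f(\theta)$ to expose the single-frequency form $A_0+A_1\cos(2r\theta)+B_1\sin(2r\theta)$ up front, after which the identity is pure angle-addition and the harmonic relation $f''=-4r^2(f-A_0)$. Both exploit exactly the same structural fact ($G^2=r^2\mathbb{1}$); your version abstracts away the quantum mechanics earlier and is arguably cleaner bookkeeping, while the paper's version keeps the operator content visible, which makes the identification of $f'$ with the commutator term $\bra{\phi}[\ii G,A]\ket{\phi}$ explicit.
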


\begin{proof}
    We first need to decompose the $r$-gates in $\gamma$ and $\theta$-dependent gates as
    \begin{equation}
    \ee^{-\ii (\theta+\gamma) G} = \ee^{-\ii \theta G} \ee^{-\ii \gamma G}.
     \end{equation} 
    The gate depending on $\theta$ can be absorbed to simplify the equation, by defining $\ket{\phi(\theta)} := \ee^{-\ii \theta G}\ket{\psi}$. 
    In contrast, the gate depending on $\gamma$ needs to be decomposed using Euler's identity from equation (\ref{eq:euler}), as done previously in Ref.~\cite{crooks_2019_psr}, but here with unequal shifts $\gamma_1$ and $\gamma_2$, in order
    to get
    \begin{align}
        g_\gpsr(\theta, \gamma_1, \gamma_2) 
        = r \Big\lbrace \bra{\phi(\theta)} \left[ \cos(r\gamma_1)\mathbb{1}+i\frac{a}{r}\sin(r\gamma_1)G \right] &A 
        \left[ \cos(r\gamma_1)\mathbb{1}-i\frac{a}{r}\sin(r\gamma_1)G \right] \ket{\phi(\theta)} \nonumber \\
        - \bra{\phi(\theta)} \left[ \cos(r\gamma_2)\mathbb{1}+i\frac{a}{r}\sin(r\gamma_2)G \right] &A 
        \left[ \cos(r\gamma_2)\mathbb{1}+i\frac{a}{r}\sin(r\gamma_2)G \right] \ket{\phi(\theta)} \Big\rbrace. 
    \end{align}
    We can now expand and reorganize this equation into six terms. 
    Note that $\bra{\phi(\theta)}A\ket{\phi(\theta)}=f(\theta)$ and 
    $\bra{\phi(\theta)}GAG\ket{\phi(\theta)}/r^2=f(\theta + \pi/(2r))$ together can be combined into the second derivative using the double-angle formula $\cos^2(x) = (\cos(2x)+1)/2$.
    The remaining terms involving $\bra{\phi(\theta)} [\ii G,A]   \ket{\phi(\theta)}$ form the first derivative. This gives
    \begin{align}\label{eq:gPSR_derivatives}
        g_\gpsr(\theta, \gamma_1, \gamma_2) 
        &= r \cos^2(r \gamma_1) \bra{\phi(\theta)} A \ket{\phi(\theta)} 
        -r \cos^2(r \gamma_2) \bra{\phi(\theta)} A \ket{\phi(\theta)}  \nonumber \\
        &~~~+ \frac{1}{r} \sin^2(r \gamma_1)\bra{\phi}  G A G \ket{\phi(\theta)} 
        - \frac{1}{r} \sin^2(r \gamma_2)\bra{\phi(\theta)}  G A G \ket{\phi(\theta)}
        \nonumber \\
        &~~~+ \cos(r \gamma_1)\sin(r \gamma_1) \bra{\phi(\theta)} [\ii G,A]   \ket{\phi(\theta)}
        + \cos(r \gamma_2)\sin(r \gamma_2)  \bra{\phi(\theta)} [\ii G,A]   \ket{\phi(\theta)} \nonumber \\
        &= \frac{ \sin(2r \gamma_1) + \sin(2r \gamma_2)}{2} f'(\theta) - \frac{\cos(2r \gamma_1) - \cos(2r \gamma_2)}{4r} f''(\theta). 
    \end{align}    
\end{proof}

\begin{corollary}[Uses of the generalized PSR]
    The generalized PSR, when multiplied with a pre-factor and for the listed values for shift angles $\gamma_1$ and $\gamma_2$, can express any higher-order derivative. 
\end{corollary}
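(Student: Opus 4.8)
The plan is to exploit the special spectral structure of the generator, which forces $f$ to be a single-frequency trigonometric function. This makes every derivative of $f$ collapse onto either $f'$ or $f''$, so that combined with the Theorem---which already writes $g_\gpsr$ as a linear combination of exactly these two derivatives---the corollary reduces to bookkeeping of coefficients and a choice of shift angles.

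First I would pin down the functional form of $f$. Starting from the Euler identity in equation~\eqref{eq:euler}, expanding $f(\theta) = \bra{\psi}\ee^{\ii\theta G} A \ee^{-\ii\theta G}\ket{\psi}$ and applying the double-angle formulas (exactly as is done inside the proof of the Theorem), one finds that
\[
  f(\theta) = c_0 + c_1\cos(2r\theta) + c_2\sin(2r\theta)
\]
for real constants $c_0,c_1,c_2$ depending on $\ket{\psi}$, $G$ and $A$. The crucial feature is that only the single angular frequency $2r$ occurs. Differentiating then yields the recurrence $f^{(n+2)} = -(2r)^2 f^{(n)}$ for all $n\geq 1$. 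Iterating it shows that every odd-order derivative is a fixed scalar multiple of $f'$, namely $f^{(2k+1)} = (-(2r)^2)^k f'$, while every even-order derivative of order at least two is a fixed scalar multiple of $f''$, namely $f^{(2k)} = (-(2r)^2)^{k-1} f''$. Hence access to $f'$ and $f''$ is equivalent to access to any higher-order derivative.

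Second, I would read off from the Theorem the two coefficient-isolating choices of shift angles. Setting $\gamma_1=\gamma_2=\gamma$ annihilates the $f''$ term, since $\cos(2r\gamma_1)-\cos(2r\gamma_2)=0$, and leaves $g_\gpsr = \sin(2r\gamma)\,f'$; dividing by $\sin(2r\gamma)$ then recovers exactly the $\cpsr$ and thus $f'$, valid for any $\gamma\neq k\pi/(2r)$. To isolate $f''$ instead, I would choose $\gamma_2 = \gamma_1 + \pi/(2r)$, which forces $\sin(2r\gamma_1)+\sin(2r\gamma_2)=0$ while leaving $\cos(2r\gamma_1)-\cos(2r\gamma_2)=2\cos(2r\gamma_1)$, so that $g_\gpsr = -\tfrac{\cos(2r\gamma_1)}{2r}\,f''$; the prefactor $-2r/\cos(2r\gamma_1)$ returns $f''$ whenever $\cos(2r\gamma_1)\neq 0$. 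Composing these two isolations with the explicit proportionality constants from the first step then supplies, for any requested derivative order $n$, a valid pair $(\gamma_1,\gamma_2)$ together with a prefactor expressing $f^{(n)}$ as a multiple of $g_\gpsr$.

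The main obstacle is the non-degeneracy bookkeeping in the $f''$ case. One must verify that the linear condition used to kill one coefficient does not simultaneously kill the other: the naive choice $\gamma_2=-\gamma_1$ does annihilate the $f'$ term, but because $\cos$ is even it also annihilates the $f''$ term, making $g_\gpsr$ identically zero. This is precisely why the half-period shift $\gamma_2 = \gamma_1 + \pi/(2r)$ must be used instead, and why its admissible range of $\gamma_1$ (excluding the zeros of $\cos(2r\gamma_1)$) has to be stated. Once this subtlety is handled, the remainder is a direct substitution into equation~\eqref{eq:gpsr_equality_1}.
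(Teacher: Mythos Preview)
Your argument is correct and follows essentially the same route as the paper: isolate $f'$ via the symmetric choice $\gamma_1=\gamma_2=\gamma$, isolate $f''$ via a second choice of shifts, and then invoke the recurrence $f^{(n+2)}=-(2r)^2 f^{(n)}$ coming from the single-frequency form of $f$. The only real difference is that for the second derivative you exhibit the full one-parameter family $\gamma_2=\gamma_1+\pi/(2r)$ (with the non-degeneracy caveat $\cos(2r\gamma_1)\neq 0$), whereas the paper just picks the single instance $\gamma_1=\pi/(2r)$, $\gamma_2=0$; your version is a mild generalization but not a different idea.
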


\begin{proof}
This statement immediately follows from equation (\ref{eq:gPSR_derivatives}) using shift angles $\gamma_1, \gamma_2 = \gamma$, resulting in
\begin{align}
    f'(\theta) &=  \frac{g_\gpsr(\theta, \gamma, \gamma)}{\sin(2r\gamma)} \label{psr}
 \end{align}
 for the first derivative and
\begin{align}  
    f''(\theta) &= 2\, g_\gpsr\left(\theta, \frac{\pi}{2r}, 0\right)
\end{align}
for the second derivative.
Note that equation (\ref{psr}) is the standard cPSR. 
For arbitrary $n\in \mathbb{N}$,
higher order derivatives of $f$ satisfy
 \begin{align}
        & f^{(n+2)} = -\frac{1}{4r^2} f^{(n)},
\end{align}
where $f^{(n)}$ refers to the $n$-th derivative of $f$. This is an important insight, in that it
shows that all expressions involving 
derivates of arbitrary order can be written as a
linear combination of first and second derivatives
of the cost function $f$, which in turn is the
data that any of the above rule exactly
delivers on the level of expectation values.
All higher-order derivatives have been previously identified as following from applying the centered PSR multiple times~\cite{mari_2020_hod}.
\end{proof}

\begin{corollary}[Further uses of the generalized PSR]
    The generalized PSR, multiplied by a known pre-factor and with shift angles $\gamma_1=\gamma_2$ is equivalent to the centered finite differences method, and provides an exact expression for the bias and variance of the centered finite differences.
\end{corollary}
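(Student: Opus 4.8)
The plan is to specialize the master identity (\ref{eq:gPSR_derivatives}) to the case $\gamma_1 = \gamma_2 = h$ and then read off the deterministic and the statistical content of the centered finite difference estimator separately. First I would set $\gamma_1 = \gamma_2 = h$ in the theorem. The crucial observation is that the coefficient of the second derivative, $\cos(2r\gamma_1) - \cos(2r\gamma_2)$, vanishes identically when the two shifts coincide, so the $f''$ contribution drops out entirely and one is left with
\begin{equation}
    g_\gpsr(\theta, h, h) = \sin(2rh)\, f'(\theta).
\end{equation}
This already settles the first claim: comparing with the definition of $g_{\cfd}$ in (\ref{eq:cfd}), the two estimators differ only by the known, $f$-independent pre-factor $1/(2rh)$, since $g_{\cfd}(\theta, h) = g_\gpsr(\theta, h, h)/(2rh)$, establishing the asserted equivalence.

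Combining the two previous displays yields the exact, non-asymptotic expression
\begin{equation}
    g_{\cfd}(\theta, h) = \frac{\sin(2rh)}{2rh}\, f'(\theta)
\end{equation}
on the level of expectation values. From here the bias follows by subtraction: defining it as $g_{\cfd}(\theta, h) - f'(\theta)$, I would read off
\begin{equation}
    \mathrm{bias} = \left( \frac{\sin(2rh)}{2rh} - 1 \right) f'(\theta),
\end{equation}
which holds for every $h$ rather than only in the limit $h \to 0$, and which recovers the known relation $g_{\cfd} \simeq f'$ since $\sin(x)/x \to 1$ as $x \to 0$. This is precisely the sharpening advertised in the text after (\ref{eq:ffd}): the approximate $\simeq$ in (\ref{eq:cfd}) is upgraded to an exact equality with an explicit error term.

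For the variance I would pass to the sampling model in which each of the two expectation values $f(\theta \pm h)$ is estimated independently from a finite number $N$ of measurements of $A$, writing $\sigma_+^2$ and $\sigma_-^2$ for the variances of the resulting estimators of $f(\theta+h)$ and $f(\theta-h)$. Because $g_{\cfd}$ is a fixed linear combination of these two independent estimators with coefficients $\pm 1/(2h)$, elementary error propagation gives
\begin{equation}
    \mathrm{Var}\bigl[ g_{\cfd}(\theta, h) \bigr] = \frac{1}{4h^2} \left( \sigma_+^2 + \sigma_-^2 \right),
\end{equation}
where each $\sigma_\pm^2$ equals the single-shot variance $\langle A^2 \rangle - f(\theta \pm h)^2$ of $A$ in the corresponding shifted state, divided by $N$. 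I expect the only genuine subtlety, and hence the main obstacle, to be bookkeeping of the measurement model rather than any analytic difficulty: one must be explicit that the two shifted circuits are evaluated on independent shot budgets, so that the covariance term vanishes, and one should emphasize that the $1/h^2$ scaling of the pre-factor is exactly what makes the variance diverge as $h \to 0$, in direct tension with the simultaneous vanishing of the bias. Everything else reduces to substituting $\gamma_1 = \gamma_2 = h$ and invoking the already-proven theorem.
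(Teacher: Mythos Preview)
Your proposal is correct and follows essentially the same route as the paper: specialize (\ref{eq:gPSR_derivatives}) to $\gamma_1=\gamma_2=h$, identify the pre-factor $1/(2rh)$ linking $g_\gpsr$ to $g_\cfd$, read off the bias $\bigl(\tfrac{\sin(2rh)}{2rh}-1\bigr)f'(\theta)$, and state the variance $\tfrac{1}{4h^2}$ times the sum of the shifted single-shot variances. If anything, you supply more justification than the paper does---the paper simply quotes the variance formula from the literature, whereas you spell out the independence argument---but the logical skeleton is identical.
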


\begin{proof}
    By setting $\gamma_1 = \gamma_2 = h$ we have \begin{align}
    g_\gpsr(\theta, h, h) 
    &= 2rh \, g_\cfd(\theta, h),
    \end{align}
    for the centered finite differences, while the bias of the centered finite differences becomes
    \begin{align} 
    \text{Bias}(\hat{g}_\cfd(\theta, h)) 
    &=  
    \mathbbm{E}\left[ \hat{g}_\cfd(\theta, h) \right] - f'(\theta)
    = 
    \frac{g_\gpsr(\theta, h, h)}{2rh} - f'(\theta)\\
    &=       \left[\frac{\sin(2rh)}{2rh} -1 \right] f'(\theta) .
    \end{align}
    The variance was already known to be
    \begin{align}\label{eq:cFD-variance}
        \text{Var}(\hat{g}_\cfd(\theta, h)) 
        &= \frac{\sigma_1^2(\theta + h) + \sigma_1^2(\theta - h)}{4h^2},
        %\approx \frac{\sigma_1^2}{2h^2},
    \end{align}
    where $\sigma_1^2$ is the one-shot variance~\cite{mari_2020_hod}.
\end{proof}

\begin{corollary}[Backward and forward finite differences from the generalized PSR]
    The generalized PSR, multiplied by a known pre-factor and with shift angles $0$ and $h$, is equivalent to both the backward and forward finite differences, and provides an exact expression for the bias and variance of these finite difference methods.
\end{corollary}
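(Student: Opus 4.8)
The plan is to treat the forward and backward rules in parallel, specializing the main theorem exactly as the preceding corollaries did, and to exploit the fact that here the two shift angles are unequal. First I would identify each finite difference rule with the gPSR at a single non-trivial shift. Setting $\gamma_1 = h$ and $\gamma_2 = 0$ collapses the definition to $g_\gpsr(\theta, h, 0) = r[f(\theta + h) - f(\theta)] = rh\, g_\ffd(\theta, h)$, and setting $\gamma_1 = 0$, $\gamma_2 = h$ gives $g_\gpsr(\theta, 0, h) = r[f(\theta) - f(\theta - h)] = rh\, g_\bfd(\theta, h)$. Hence the known pre-factor is $1/(rh)$ in both cases, establishing the claimed equivalence.

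Next I would insert these shift values into the exact identity (\ref{eq:gpsr_equality_1}). Since one shift vanishes, the simplifications $\sin(0) = 0$ and $\cos(0) = 1$ apply, and dividing by $rh$ yields
\begin{align}
g_\ffd(\theta, h) = \frac{\sin(2rh)}{2rh}\, f'(\theta) - \frac{\cos(2rh) - 1}{4r^2 h}\, f''(\theta),
\end{align}
together with the analogous expression for $g_\bfd$ in which the sign of the $f''$ term is reversed. The key structural point is that, in contrast to the symmetric centered case $\gamma_1 = \gamma_2$ of the previous corollary, the asymmetry $\gamma_1 \neq \gamma_2$ prevents the $f''$ contribution from cancelling; this surviving term is exactly the origin of the first-order bias that distinguishes these rules from the centered one.

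The bias then follows at once by subtracting $f'(\theta)$, using that the gPSR is exact on the level of expectation values so that $\mathbbm{E}[\hat{g}_\ffd(\theta, h)] = g_\ffd(\theta, h)$. This gives
\begin{align}
\text{Bias}(\hat{g}_\ffd(\theta, h)) = \left[\frac{\sin(2rh)}{2rh} - 1\right] f'(\theta) - \frac{\cos(2rh) - 1}{4r^2 h}\, f''(\theta),
\end{align}
with the backward bias differing only by the sign of the final term. As a consistency check I would note that $(\cos(2rh) - 1)/(4r^2 h) \to -h/2$ as $h \to 0$, recovering the familiar leading behaviour $\pm (h/2) f''(\theta)$ of the classical Taylor expansions. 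For the variance I would assume the two circuit evaluations are sampled independently and propagate their one-shot variances through the $1/(rh)$ pre-factor, obtaining
\begin{align}
\text{Var}(\hat{g}_\ffd(\theta, h)) = \frac{\sigma_1^2(\theta + h) + \sigma_1^2(\theta)}{h^2},
\end{align}
in direct analogy to the centered formula (\ref{eq:cFD-variance}), and likewise for the backward rule with the evaluation point $\theta + h$ replaced by $\theta - h$.

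No deep obstacle arises, since the statement is a direct specialization of the main theorem; the only care needed is in the bookkeeping of the asymmetry between the two cases. One must keep straight which of $\gamma_1, \gamma_2$ is set to zero so that the surviving $f''$ term carries the correct sign, and correspondingly which evaluation point enters the variance.
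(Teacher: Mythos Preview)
Your approach is essentially identical to the paper's: specialize the gPSR with one shift set to zero, read off the bias from identity~(\ref{eq:gpsr_equality_1}), and state the variance by propagating independent one-shot variances. Your pre-factor $1/(rh)$ and the resulting coefficients $\sin(2rh)/(2rh)$ and $(\cos(2rh)-1)/(4r^2 h)$ are in fact correct; the paper's printed proof drops a factor of $r$ in the pre-factor (writing $g_\gpsr(\theta,0,h)=h\,g_\bfd(\theta,h)$ and hence $\sin(2rh)/(2h)$ in the bias) and carries a spurious $4$ in the variance denominator, so where your expressions differ from the paper's you should trust yours.
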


\begin{proof}
    For the backward FD we set $\gamma_1 = 0$ and $\gamma_2 = h$, to get
    \begin{align}
        g_\gpsr(\theta, 0, h) 
        &= h\, g_\bfd(\theta, h), \\
        \text{Bias}(\hat{g}_\bfd(\theta,h)) 
        &=
        \mathbbm{E}\left[ \hat{g}_\bfd(\theta,h) \right] - f'(\theta)
        = 
        \frac{g_\gpsr(\theta, 0, h)}{h} - f'(\theta) \nonumber \\
        &= 
        \left[ \frac{\sin(2rh)}{2h} - 1 \right] f'(\theta) + \frac{\cos(2rh)-1}{4rh} f''(\theta), \\
        %%%
        \text{Var}(\hat{g}_\bfd(\theta,h)) 
        &= \frac{\sigma_1^2(\theta) + \sigma_1^2(\theta - h)}{4h^2}.
        %\approx \frac{2\sigma_1^2}{h^2}.
    \end{align}
    By choosing $\gamma_1 = h$ and $\gamma_2 = 0$ the same relation for bias and variance is obtained for the forward FD method.
\end{proof}

The here presented gPSR unites all known forms of the finite differences and the centered PSR to the analytic gradient, thereby uniting all known methods in the family of single-component gradient rules and settling the discussion on trade-off between these approximate and exact methods. 
For the remaining two unknown forms, the backward and forward PSR, we provide a no-go theorem in the the following section.
\section{No-go result for a forward and backward parameter shift rule}
\label{sec:nogo}

It is reasonable to expect near-term quantum devices to allow for wider and deeper circuits in the pursuit of achieving a practical quantum advantage, thereby resulting in the need to optimize a large number of parameters in the variational algorithm.
Evaluating the gradient for $n$ parameters requires $2n$ cost function evaluations per gradient vector evaluation with the parameter shift rule.
As a hypothetical backward or forward parameter shift rule, in the spirit of the backward and forward finite differences, would require only $n+1$ evaluations of the cost function, it is desirable to find such a rule.
Here, we show that it is impossible for such a rule to exist when one considers the set of all cost functions expressed through variational quantum circuits.

\begin{theorem}[Impossibility of a backward and forward parameter shift rules]
    For all $0 < \gamma < \pi/r$ and $r \in \mathbb{R}$ there exists no function $g:\mathbb{R} \times \mathbb{R} \rightarrow \mathbb{R}$ with the property that 
    \begin{equation}
        g[f(\theta), f(\theta + \gamma)] = f' (\theta)
    \end{equation}
    for all circuit expectation values $f \in \mathcal{F}_r$ (see Definition \ref{def:set-of-expec-vals}) and all $\theta \in [0, \pi/r)$.    
\end{theorem}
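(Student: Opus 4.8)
The plan is to argue by contradiction, exploiting the fact that any such $g$ can only see the two real numbers $f(\theta)$ and $f(\theta+\gamma)$. If I can exhibit two expectation values in $\mathcal{F}_r$ that agree at both sample points $\theta$ and $\theta+\gamma$ yet differ in their derivative at $\theta$, then $g$ would be forced to return two distinct outputs on one and the same input, which no function can do.

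First I would pin down the structure of $\mathcal{F}_r$. Inserting the Euler decomposition (\ref{eq:euler}) into $f(\theta)=\bra{\psi}\ee^{\ii\theta G}A\ee^{-\ii\theta G}\ket{\psi}$ and applying the double-angle identities, exactly as in the proof of the gPSR theorem, turns every $f\in\mathcal{F}_r$ into a function of the form
\[
    f(\theta) = a + b\cos(2r\theta) + c\sin(2r\theta),
\]
with real coefficients $a,b,c$ fixed by $\ket\psi$, $A$ and $G$. A short explicit single-qubit construction (e.g.\ generator $G=r\,Z$ with Pauli-$Z$, a state $\ket\psi$ with both amplitudes nonzero, and an observable $A$ with a tunable off-diagonal entry) shows that every triple $(a,b,c)\in\rr^3$ is realizable, so $\mathcal{F}_r$ contains \emph{all} such trigonometric functions; in particular the zero function ($A=0$) lies in $\mathcal{F}_r$.

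Then I would reduce the construction to linear algebra. Setting $h:=f_1-f_2$, the task becomes to find a nonzero $h(\theta)=\alpha+\beta\cos(2r\theta)+\kappa\sin(2r\theta)$ obeying $h(\theta)=0$, $h(\theta+\gamma)=0$ and $h'(\theta)\neq 0$. Writing $s:=2r\theta$ and $t:=2r\gamma$, these three conditions are governed by the coefficient vectors $v_0=(1,\cos s,\sin s)$, $v_1=(1,\cos(s+t),\sin(s+t))$ and $w=(0,-\sin s,\cos s)$ applied to $(\alpha,\beta,\kappa)$. Such an $h$ exists precisely when $w\notin\mathrm{span}(v_0,v_1)$, i.e.\ when $v_0,v_1,w$ are linearly independent. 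Expanding the $3\times 3$ determinant along the first column and using the angle-subtraction formula collapses it, independently of $\theta$, to $\det = \cos t - 1 = \cos(2r\gamma)-1$.

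Finally I would invoke the hypothesis $0<\gamma<\pi/r$, which gives $0<2r\gamma<2\pi$ and hence $\cos(2r\gamma)\neq 1$, so the determinant is nonzero and the required $h$ exists for every $\theta$. Taking $f_1:=h$ and $f_2:=0$, both in $\mathcal{F}_r$, feeds $g$ the identical pair $(0,0)$ while $f_1'(\theta)=h'(\theta)\neq 0=f_2'(\theta)$, contradicting $g[f(\theta),f(\theta+\gamma)]=f'(\theta)$. I expect the main obstacle to be the first step: one must genuinely verify that $\mathcal{F}_r$ is rich enough — containing all of $\{a+b\cos(2r\theta)+c\sin(2r\theta)\}$ — so that the (at least one-dimensional) solution space of the two evaluation constraints is not entirely contained in the kernel of the derivative functional. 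Once this richness is in hand, the determinant computation finishes the argument and cleanly exposes why the degenerate shifts $\gamma\in\{0,\pi/r\}$, where $\cos(2r\gamma)=1$ and the two evaluations coincide, are exactly the ones excluded by hypothesis.
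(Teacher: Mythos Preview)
Your argument is correct and follows a genuinely different route from the paper. Both proofs share the same core strategy---exhibit two elements of $\mathcal{F}_r$ that coincide at $\theta$ and $\theta+\gamma$ yet have distinct derivatives at $\theta$---but they diverge in how the pair is produced. The paper builds the two functions by hand: it fixes one $f$ with generator $G$, then manufactures a second $\tilde f$ from a \emph{different} $r$-gate generator $F$ together with a tailored observable $B$, chosen so that $\tilde f(\zeta)=f(\zeta)$ and $\tilde f(\zeta+\gamma)=f(\zeta+\gamma)$ while a commutator condition $\bra{\zeta}[G-F,A]\ket{\zeta}\neq 0$ forces the derivatives apart. You instead first identify $\mathcal{F}_r$ with the full three-parameter family $\{a+b\cos(2r\theta)+c\sin(2r\theta):a,b,c\in\rr\}$ via a single-qubit realization, and then reduce the existence of a separating pair to the non-vanishing of one $3\times 3$ determinant, which collapses to $\cos(2r\gamma)-1$. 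Your route buys transparency: it makes explicit that the obstruction is purely dimensional (two point-evaluation constraints cannot pin down three coefficients) and it simultaneously explains why the endpoints $\gamma\in\{0,\pi/r\}$ are exactly the degenerate shifts. The paper's construction, by contrast, avoids having to verify surjectivity onto $\rr^3$ and stays closer to the circuit picture, at the cost of somewhat opaque choices of $F$ and $B$.
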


\begin{proof}
    Let $\zeta \in \mathbb{R}$ and $0 < \gamma < \pi/r$ be arbitrary but fixed.
    We pick an $f\in \mathcal{F}_r$ along with some $r$-gate $F$ where
    \begin{align}
        f(\theta) &= \bra{\psi} \ee^{\ii\theta G} A \ee^{-\ii\theta G} \ket{\psi},
    \end{align}
    and the  conditions
    \begin{align}
        c_\perp \ket{\zeta_\perp} := \ee^{-\ii\gamma F}\ket{\zeta} - \langle\zeta\vert\ee^{-\ii\gamma F}\vert\zeta\rangle\ket{\zeta} &\neq 0, \\
        \label{eq:GF-commutator}
        \bra{\zeta} [G-F, A] \ket{\zeta} &\neq 0,
    \end{align}
    are fulfilled using the short-hand $\ket{\zeta} := \ee^{-\ii \zeta G}\ket{\psi}$,
    where we have imposed $\ket{\zeta_\perp}$ to have unit norm.
    For most choices of $G$, $F$, $A$, and $\ket{\psi}$ these conditions will be fulfilled. However, it suffices to show that one such choice exists: For instance, choose $G=X$, $F=(Y+Z)/\sqrt{2}$, $A=Y$, and $\ket{\psi} = \ee^{\ii\zeta X}\ket{0}$.
    Now we can define a second element of $\mathcal{F}_r$ as
    \begin{align}
        \tilde{f}(\theta) &:= \bra{\zeta} \ee^{\ii(\theta - \zeta) F} B \ee^{-\ii(\theta - \zeta) F} \ket{\zeta},
    \end{align}
    with the observable
    \begin{align}
        B := A - \frac{\bra{\zeta}\ee^{\ii\gamma G} A\ee^{-\ii\gamma G}\ket{\zeta} - \bra{\zeta}\ee^{\ii\gamma F} A\ee^{-\ii\gamma F}\ket{\zeta}}{\vert c_\perp\vert^2}
        \ket{\zeta_\perp}\!\bra{\zeta_\perp}.
    \end{align}
    One can now verify that with this choice, we have
    \begin{align}
        \tilde{f}(\zeta) &= \bra{\zeta}A\ket{\zeta} = f(\zeta) ,\\
        \tilde{f}(\zeta + \gamma) &= \bra{\zeta}\ee^{\ii\gamma G} A\ee^{-\ii\gamma G}\ket{\zeta} = f(\zeta + \gamma).
    \end{align}
    Meanwhile, the derivatives at
    the point $\zeta$ are given by
    \begin{align}
        f'(\zeta) &= \bra{\zeta}[\ii G, A]\ket{\zeta} ,\\
        \tilde{f}'(\zeta) &= \bra{\zeta} [\ii F, A] \ket{\zeta},
    \end{align}
    which are unequal by the assumption in Eq.~(\ref{eq:GF-commutator}).
    Therefore, the function $g$ discussed here, as described in the statement of the theorem, would receive the same inputs for $f$ and $\tilde{f}$ at $\theta=\zeta$ and $\theta=\zeta+\gamma$, while having to return different outputs.
    Hence, such a function $g$ cannot exist.
\end{proof}

\section{Discussion}
\label{sec:discussion}

% All gradient methods are linear combinations of the first and second derivative
Variational quantum algorithms hold great potential for practical quantum applications. 
These algorithms are limited by their scaling in terms of number of variational parameters $n$, due to the increase in optimization difficulty.
Gradient methods have gained significant attention over the past years as key method for optimization, and come in various forms. 
In our work, we focused on the single-component gradient methods, including multiple variations of the finite differences (FD) and the parameter shift rule (PSR).
The key lesson to be learned from our work, is that any of the above methods deliver linear combinations of higher-order derivatives, which can be reduced back to a linear combination of the first and second derivatives, of the cost function 
\begin{equation}
g(\theta, \boldsymbol\gamma)
= a({\boldsymbol\gamma})
f'(\theta) +
b({\boldsymbol\gamma})
f''(\theta)
\end{equation}
with known functional dependence of the coefficients.
The centered parameter shift rule (cPSR) solely targets the first derivative directly - the other methods however deliver instances of such linear combinations. 
We have formalized this, and proposed the generalized parameter shift rule (gPSR).
This gPSR relates, in an exact manner, to all versions of the FD method, as well as the cPSR. 
Furthermore, it allowed us to calculate the bias and variance for all previously mentioned methods. 
To fully cover the family of single-component gradient rules, we have also presented a proof that neither a backward, nor forward parameter shift rule can exist. 

% We need to exploit the notion that we now gain access to the linear combination of first and second-order derivatives in the single-component setting
The notion that each method delivers a linear combination of first- and second-order derivatives, points at an opportunity for future work to explore how these linear combination can be exploited, when addressing the task of classical control of variational quantum algorithms. 
Traditional second order optimization methods are known to
outperform first order methods, in that they allow for the assignment of better directions of updates, as well as better step sizes; allowing for jumps directly to the minimum of the local quadratic approximation \cite{Second}. 
Here, the situation is slightly different and somewhat unusual from the perspective of second order optimization - in that first and second derivatives in a fixed combination are natively provided by the data, and most resource efficient  approaches should use those native data at hand when addressing a control problem.
%\fw{I like this a lot, thanks! I just changed $\xi$ to $\boldsymbol\gamma$ to keep the notation consistent. }

% And can extend this line of reasoning to the multi-component setting
This line of reasoning can also be extended to the multi-component gradient rules, where multiple parameters are updated simultaneously.
A natural first step would be to explore the existence of an exact multi-parameter shift rule, aimed at improving the number of circuit evaluations per chosen parameter.
General rules of this type would give rise to scalar products with gradients and Hessians of the cost function. These can be exploited in suitable second order methods.
Approaches of multi-parameter optimization would also open up the possibility of exploiting further structure in the gradient, inspired from the domain of signal processing - this includes resorting to sparse and low-rank structures \cite{CompressedSensingIntroCandes}.
However, in line with results from this work, we believe one should look beyond this, and aim to exploit the fact that multiple cross-derivatives between parameters can be accessed as a linear combination; just as we showed for the single-component setting. 
For instance, as higher-order derivatives can be expressed in terms of first and second derivatives, the cost function can also be expanded in \emph{all} parameters simultaneously.
Here the $l$-th derivative, represented by a rank-$l$ tensor, can be computed at a cost significantly lower than $n^l$, where $n$ is the number of parameters.
Moreover, the series has finite length.

In particular, for a setting with two parameterized gates, the circuit evaluation of the expectation value of $f(\theta_1 + \gamma_1, \theta_2 + \gamma_2)$ can not only express first- and second-order derivatives for each parameter of $(\theta_1, \theta_2)\mapsto f(\theta_1, \theta_2)$, but also mixed (higher-order) derivatives.
Given this access, one could envision a gradient rule that yields a linear combination of a subset of these derivatives, and an optimization method that works directly on the outcome of this gradient rule.
One example of such an approach would be the following set of circuit evaluations, which yield a linear combination of the Hessian elements as
\begin{align}
    & f(\theta_1 + \gamma_1, \theta_2 + \gamma_2) + f(\theta_1 - \gamma_1, \theta_2 - \gamma_2) - 
    2\sin^2(r\gamma_1)\sin^2(r\gamma_2)\, f\left(\theta_1 + \frac{\pi}{2}, \theta_2 + \frac{\pi}{2}\right) \nonumber \\
    &\quad = 2 \sin(2r\gamma_1)\sin(2r\gamma_2) \frac{\partial}{\partial \theta_1}\frac{\partial}{\partial \theta_2}f(\theta_1, \theta_2)
    + \frac{1}{4}\cos^2(r\gamma_2) \left(5 - 3\cos(2r\gamma_1)\right)
    \frac{\partial^2}{\partial \theta_1^2}f(\theta_1, \theta_2) \nonumber \\
    & \quad\quad+ \frac{1}{4} \cos^2(r\gamma_1) \left(5 - 3\cos(2r\gamma_2)\right)
    \frac{\partial^2}{\partial \theta_2^2}f(\theta_1, \theta_2).
\end{align}
When using this as input for the optimization, no classical access to the individual components of the Hessian are required; thereby reducing the required number of evaluations of the quantum circuit.
Such a setup could give rise to pseudo-Newton methods that make use of this linear combination, directly in its optimization process. 
A comprehensive solution would extend this methodology, and economically estimate the full gradient over all parameters, based on suitable control patterns over the variational parameters. 
It is our hope that the present work stimulates such research endeavours in near-term quantum computing.

\section*{Acknowledgements}
We would like to thank Johannes~Jakob~Meyer, Paul~K.~Faehrmann, and Ryan~Sweke for helpful discussions. This work has been supported by the BMWi (PlanQK) and the BMBF (Hybrid and FermiQP). It has also received funding from the DFG under Germany's Excellence Strategy – 
MATH+: The Berlin Mathematics Research Center, 
EXC-2046/1 – project ID: 390685689,
and the CRC 183, as well as the European Union's 
Horizon 2020 research and innovation programme under 
grant agreement No.\ 817482 (PASQuanS).

\printbibliography

\end{document}